\documentclass[11pt]{amsart}
\usepackage{amsmath,amssymb,amsfonts,amsbsy}
\usepackage{tikz-cd}
\usetikzlibrary{shapes}
\usepackage{amsbsy}
\usepackage{amscd}
\usepackage{wasysym}
\usetikzlibrary{matrix,arrows,decorations.pathmorphing}
\usepackage{graphicx}
\usepackage{float}
\usepackage{setspace}
\usepackage[margin=1.0in]{geometry}

\newtheorem{thm}{Theorem}
\newtheorem{lem}{Lemma}
\newtheorem{cor}{Corollary}
\newtheorem{prop}{Proposition}

\theoremstyle{definition}
\newtheorem{example}{Example}
\theoremstyle{definition}
\newtheorem{defn}{Definition}
\theoremstyle{definition}
\newtheorem{remark}{Remark}


\newcommand{\rmv}[1]{}

\newcommand{\C}{\mathbb{C}}

\def\ket#1{| #1 \rangle}

\def\kb#1#2{|#1\rangle\!\langle #2 |}

\newcommand{\diag}{\operatorname{diag}}

\begin{document}

\title[Quantum Privacy and Schur Product Channels]{Quantum Privacy and Schur Product Channels}
\author[J. Levick, D.~W. Kribs, R. Pereira]{Jeremy Levick$^{1,2}$, David~W.~Kribs$^{1,3}$, Rajesh Pereira$^{1}$}

\address{$^1$Department of Mathematics \& Statistics, University of Guelph, Guelph, ON, Canada N1G 2W1 dkribs@uoguelph.ca}
\address{$^2$African Institute for Mathematical Sciences, Muizenberg, Cape Town, 7945, South Africa \newline jerlevick@gmail.com}
\address{$^3$Institute for Quantum Computing, University of Waterloo, Waterloo, ON, Canada N2L 3G1 pereirar@uoguelph.ca}

\begin{abstract}
We investigate the quantum privacy properties of an important class of quantum channels, by making use of a connection with Schur product matrix operations and associated correlation matrix structures. For channels implemented by mutually commuting unitaries, which cannot privatise qubits encoded directly into subspaces, we nevertheless identify private algebras and subsystems that can be privatised by the channels. We also obtain further results by combining our analysis with tools from the theory of quasiorthogonal operator algebras and graph theory.
\end{abstract}

\subjclass[2010]{47L90, 81P45, 81P94, 94A40}

\keywords{Schur product, correlation matrix, quantum channel, operator system, private quantum code, private algebra, quasiorthogonal algebra, graph of a matrix.}

\maketitle

\section{Introduction}

Private algebras are fundamental objects of study in the theory of quantum privacy. They are the mathematical manifestation of physically motivated techniques that have been developed to hide qubits from observers in quantum systems in a variety of settings, and also referred to as private quantum channels, private quantum codes and private subsystems \cite{ambainis,bartlett2,bartlett1,boykin,church,cleve,crann,crepeau,jochym,jochym2}. One of the basic challenges in the subject is to identify private codes for specific quantum channels or classes of channels. Only recently has the base of examples begun to expand significantly, starting with a surprisingly simple physical model in \cite{jochym} and expanding to general Pauli channels in \cite{levick}.

In this paper, we begin with the observation that the privacy properties of a large class of quantum channels can be investigated via the matrix theoretic notions of Schur products and correlation matrices. Specifically, all random unitary channels defined by mutually commuting unitaries can be implemented as a Schur product completely positive map with a correlation matrix \cite{kye,li,paulsen}. It is known that these channels cannot privatise qubits encoded directly into subspaces of the underlying Hilbert space for the channel. However, as initially discovered in \cite{jochym}, such channels can nevertheless privatise quantum information, via more delicate subsystem encodings, which from the algebra perspective correspond to more general `ampliated' subalgebras of the full C$^*$-algebra of operators on the system Hilbert space. Here we present a comprehensive analysis of the privacy properties for this class of channels, identifying when private algebras exist for them. Our analysis is based on the underlying correlation matrix graph structure. We then combine this analysis with a connection between operator systems, another graph, and quasiorthonal operator algebra techniques \cite{levick} to obtain further general privatisation results on the channels.

This paper is organised as follows. In the next section we present preliminary notions and a motivating example. In Section~3 we make the connection with Schur products and we use associated correlation matrix and graph structures to identify private algebras and subsystem codes for random unitary channels implemented with mutually commuting operators. In Section~4 we expand our analysis from a different perspective, making use of an operator system and quasiorthogonal algebra approach.

\section{Quantum Channels and Private Algebras}

Let $\Phi:M_n(\C)\rightarrow M_n(\C)$ be a quantum channel: a trace-preserving, completely positive linear map on the $n\times n$ complex matrices. Then
$$\Phi(\rho) = \sum_i A_i \rho A_i^*$$ for some matrices $A_i$, the \emph{Kraus} operators of the channel. That $\Phi$ is trace preserving is equivalent to the condition that $\sum_i A_i^*A_i = I.$

Recall that every $*$-subalgebra $\mathcal{A}$ of $M_n(\C)$ is $*$-isomorphic to an orthogonal direct sum of full matrix algebras $M_{n_1}\oplus \ldots \oplus M_{n_r}$ and is unitarily equivalent to a corresponding direct sum of simple irreducible algebras: $\mathcal A \cong \oplus_{k=1}^r (I_{m_k}\otimes M_{n_k}(\C))$, where $I_m$ is the identity operator on $M_m(\C)$.

\begin{defn}\label{defnpriv} \cite{church,crann,levick}
A channel $\Phi$ \emph{privatises} a $*$-subalgebra $\mathcal{A}$ of $M_n(\C)$, if there exists a fixed density operator $\rho_0$ such that $\Phi(\rho)= \rho_0,$
for all density operators $\rho\in\mathcal{A}$. In many instances that naturally arise we have $\rho_0 = \frac1n I_n$. As this will be our primary focal point in the paper, we shall say the channel {\it privatises to the unit} in this case.
\end{defn}

\begin{remark} Notice that although Definition \ref{defnpriv} defines privatisation of a $*$-subalgebra, the definition of privatisation can be modified in the obvious way to encompass privatisation of any subset of density operators $S$: $\Phi$ privatises $S$ if there exists a fixed $\rho_0$ such that for all $\rho\in S$, $\Phi(\rho)=\rho_0$. Observe that if $\mathcal{A}\simeq M_2(\C)$ then $\Phi$ can privatise a logical qubit of information, and if $\mathcal{A} \simeq M_{2^k}(\C)$ then $k$ qubits can be privatised. We also remark that privatisation to the unit is a seemingly natural type of privatisation to consider, especially when we are interested in channels that privatise $*$-subalgebras. For instance, privatisation was first considered in the context of random unitary channels, where it was noted in \cite{ambainis} that for unital channels, if the privatised set $S$ contains the maximally mixed state, $\rho_0$ must be the identity. For Schur product channels, which is our primary focus, $\Phi$ is unital if and only if $\Phi$ is trace-preserving, so in order to investigate Schur product channels that privatise unital $*$-subalgebras, we must work with privatisation to the unit. Many other interesting and tractable classes of channels, such as random unitary channels are also necessarily unital, and so to study privatisation of $*$-subalgebras in this setting, we again are forced to use privatisation to the unit. It should be possible to extend our analysis to more general privacy for unital channels, using the structure theory for such channels as in \cite{kribsfixed}, but we leave this for consideration elsewhere.
\end{remark}

Here we are primarily interested in studying a class of quantum channels that, at first glance, do not appear to have necessary privatising features. A map $\Phi$ is a {\it random unitary channel} if its Kraus operators are positive multiples of unitary operators; $\Phi(\rho) = \sum_i p_i U_i \rho U_i^*$, with the $p_i$ forming a probability distribution. These channels are centrally important in quantum privacy and quantum error correction, and many privatise quantum information; for instance, the simple complete depolarizing channel, with (unnormalized) Kraus operators given by the Pauli operators $I,X,Y,Z$, satisfies $\Phi(\rho) = \frac12 I$ for all single qubit $\rho$. However, if the unitaries $U_i$ are {\it mutually commuting} one can verify \cite{jochym,jochym2} that $\Phi$ cannot privatise any (unamplified) matrix algebras of the form $M_k(\C)$; that is, $\Phi$ has no private {\it subspaces}. Nevertheless, as we now understand, random unitary channels determined by mutually commuting unitaries can still privatise algebras, but only in the form of so-called private {\it subsystems}, which translates to algebras $I_m\otimes M_n(\C)$ with $m>1$. Consider the following motivating example from \cite{jochym} that illustrates these points, and which we will return to following the analysis of the next section.

\begin{example}\label{nopriv} Let $\Phi: M_2(\C)\rightarrow M_2(\C)$ be the channel whose Kraus operators are $\frac{1}{\sqrt{2}}I$ and $\frac{1}{\sqrt{2}}Z$, where $Z$ is the Pauli $Z$ matrix. Then
$$\Phi(\rho) = \frac{1}{2}\bigl( \rho + Z\rho Z\bigr)$$ acts by projecting $\rho$ down to its diagonal. If $\Phi(\rho) = \frac{\mathrm{tr}}{2}(\rho)I$ then necessarily $\rho$ has constant diagonal. The space of such matrices is of dimension $3$, and so cannot accommodate a qubit $*$-subalgebra. In particular, $\Phi$ does not privatise any qubits.

However, while $\Phi$ itself does not privatise any qubits, the channel $\Phi^{\otimes 2} = \Phi\otimes \Phi$ does privatise a qubit. Indeed, consider the channel $\Phi\otimes \Phi: M_4(\C)\rightarrow M_4(\C)$. The Kraus operators of $\Phi\otimes \Phi$ are $\{\frac{1}{2}I\otimes I,\frac{1}{2} I\otimes Z, \frac{1}{2}Z\otimes I,\frac{1}{2}Z\otimes Z\}$ and so
$$\Phi\otimes \Phi(\rho)= \frac{1}{4}\biggl( \rho + (I\otimes Z)\rho(I\otimes Z)+(Z\otimes I)\rho(Z\otimes I)+(Z\otimes Z)\rho(Z\otimes Z)\biggr),$$
which again is the projection onto the diagonal. Thus, as before, for $\Phi\otimes \Phi(\rho) = \frac{\mathrm{tr}(\rho)}{4}I$ we require that $\rho$ have constant diagonal. But now, there is an algebra isomorphic to $M_2(\C)$ that satisfies this: the algebra generated by
$\{I\otimes X, Y\otimes Z\}$, which is equal to $\mathrm{span}\{ I\otimes I, I\otimes X, Y\otimes Y, Y\otimes Z\}$. Note that $I\otimes X$, $Y\otimes Y$, $Y\otimes Z$ have no non-zero on-diagonal entries, and so the diagonal of $a_1 I\otimes I + a_2 I\otimes X + a_3 Y\otimes Y + a_4 Y\otimes Z$ is $a_1$.
\end{example}

Hence, in addition to the basic fact that random unitary channels with commuting unitaries cannot privatise a subspace (i.e., corresponding to unampliated matrix algebras $M_n(\C)$), we observe that in some cases a tensor power of the channel may yet privatise an ampliated matrix algebra, in this case, an algebra unitarily equivalent to $I_2\otimes M_2(\C)$. These points for this particular class of channels are the focus of our analysis in the next section.

\section{Schur Product Channels and Privacy}

Let $\Phi$ be a channel whose Kraus operators $A_i$ are scalar multiples of mutually commuting unitaries. We begin with an observation that connects private codes for such channels with a fundamental operation in matrix theory. By applying the spectral theorem we may simultaneously diagonalize all the $A_i$; that is, we can find a common basis $\{ \ket{k} \}$ such that the matrix representation for each $A_i$ in this basis is diagonal. We shall write $A_i$ both for the operator and for this diagonal matrix representation, and put $A_i = \diag(A^{(i)}_{k})$ with $A_k^{(i)}\in\mathbb C$ as the diagonal entries of $A_i$.  It follows that the matrix entries in this basis of an output state $\Phi(\rho)= (\Phi(\rho)_{kl})$ satisfy:
\[
\Phi(\rho)_{kl} = \sum_i  \rho_{kl} A^{(i)}_{k} \overline{A^{(i)}_{l}} = \rho_{kl} \sum_i A^{(i)}_{k} \overline{A^{(i)}_{l}}.
\]
Now, if $C = (C_{kl})$ is the matrix given by $$C_{kl} = \sum_i  A^{(i)}_{k} \overline{A^{(i)}_{l}},$$ then we conclude that
$$\Phi(\rho) = C\circ \rho$$ where $\circ$ denotes the entrywise Schur (also called Hadamard) product of two matrices. Moreover, in order for $\Phi$ of this form to be completely positive and trace preserving, necessarily $C$ is positive semidefinite and has all $1$'s down the diagonal. Such a matrix is called a \emph{correlation} matrix. For more on the structure of Schur product channels and correlation matrices, see \cite{kye,li,paulsen}. Let us formulate this observation as a result.

\begin{prop}\label{schurchannel}
Let $\Phi$ be a channel with Kraus operators that are scalar multiples of mutually commuting unitaries. Then there is a correlation matrix $C$ such that, up to a unitary change of basis, $\Phi(\rho) = C \circ \rho$ for all $\rho$.
\end{prop}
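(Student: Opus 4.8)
The plan is to make rigorous the computation already sketched in the paragraph preceding the statement, in three moves: simultaneous diagonalisation, the entrywise computation identifying the Schur multiplier, and verification that the multiplier is a correlation matrix.

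First I would invoke the spectral theorem for a commuting family of normal operators. Writing each Kraus operator as $A_i = c_i U_i$ with $c_i \in \C$ and $\{U_i\}$ a family of mutually commuting unitaries, one observes that each $A_i$ is normal (since $A_i A_i^* = \abs{c_i}^2 I = A_i^* A_i$) and that the family $\{A_i\}$ is itself mutually commuting. Hence there is a single unitary $V$ with $V^* A_i V = \diag(A^{(i)}_k)$ for every $i$. Replacing $\Phi$ by the unitarily conjugated channel $\rho \mapsto V^*\Phi(V\rho V^*)V$ — a unitary change of basis — we may assume every $A_i$ is already diagonal in the standard basis $\{\ket{k}\}$.

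Second, in this basis a direct entrywise computation gives, for each $\rho$,
\[
\Phi(\rho)_{kl} = \sum_i \bigl(A_i \rho A_i^*\bigr)_{kl} = \sum_i A^{(i)}_k\, \rho_{kl}\, \overline{A^{(i)}_l} = \rho_{kl}\,C_{kl},
\qquad \text{where } C_{kl} := \sum_i A^{(i)}_k \overline{A^{(i)}_l},
\]
so that $\Phi(\rho) = C\circ\rho$. Third, I would check that $C$ is a correlation matrix. Positive semidefiniteness: letting $v_i$ denote the vector with entries $A^{(i)}_k$, we have $C = \sum_i v_i v_i^*$, a finite sum of rank-one positive operators. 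Unit diagonal: since each $A_i$ is diagonal we have $A_i^* A_i = \diag(\abs{A^{(i)}_k}^2)$, so the trace-preservation identity $\sum_i A_i^* A_i = I$ forces $C_{kk} = \sum_i \abs{A^{(i)}_k}^2 = 1$ for all $k$. Thus $C$ is a correlation matrix and $\Phi(\rho) = C\circ\rho$, as claimed (complete positivity of $\rho \mapsto C\circ\rho$ being then automatic from the Schur product theorem, consistent with $\Phi$ being a channel).

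The only point requiring genuine care — rather than a real obstacle — is the simultaneous diagonalisation step: one must note explicitly that passing from the $U_i$ to scalar multiples $A_i = c_i U_i$ preserves both commutativity and normality, so that the standard result on simultaneous unitary diagonalisation of a commuting family of normal matrices applies. The remaining steps are routine bookkeeping.
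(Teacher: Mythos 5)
Your proof is correct and follows essentially the same route as the paper: simultaneous unitary diagonalisation of the commuting (normal) Kraus operators, the entrywise computation identifying $C_{kl}=\sum_i A^{(i)}_k\overline{A^{(i)}_l}$, and the observation that trace preservation forces unit diagonal while the rank-one sum decomposition gives positive semidefiniteness. Your explicit verification that $C$ is a correlation matrix directly from the Kraus data is a slightly more detailed bookkeeping of what the paper asserts, but the argument is the same.
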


Notice that the channel in Example \ref{nopriv} is of the form $(\Phi\otimes\Phi)(\rho) = I_4 \circ \rho$ for all $\rho$, and in particular $\Phi(\rho) = \frac14 I_4$ for all density matrices $\rho$ with constant diagonal. Note that the algebra generated by $\{I\otimes X, Y\otimes Z\}$ consists entirely of matrices with constant diagonal.

Let us consider the Schur product form in more detail from the perspective of private quantum codes. We begin with a simple observation and do not concern ourselves with algebra structures for the moment.

\begin{prop}\label{zeroentries} Suppose that $\Phi$ is a channel and $C$ is a correlation matrix such that $\Phi(\rho)=\rho \circ C$ for all $\rho$. Then a set of density operators $\mathcal S$ is privatised by $\Phi$ to the unit if and only if every $\rho\in\mathcal S$ has constant diagonal and $\rho_{ij}c_{ij}=0$ for all $1\leq i\neq j\leq n$.
\end{prop}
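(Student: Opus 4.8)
The plan is to unwind the definition of privatisation to the unit and exploit the one feature of a correlation matrix that is actually needed here, namely that all of its diagonal entries equal $1$. Write $\rho_0 = \frac1n I_n$. Since $\Phi(\rho) = \rho \circ C$ and $c_{ii} = 1$ for every $i$, the diagonal of the output coincides with that of $\rho$, i.e. $\Phi(\rho)_{ii} = \rho_{ii} c_{ii} = \rho_{ii}$, while off the diagonal $\Phi(\rho)_{ij} = \rho_{ij} c_{ij}$ for $i \neq j$. Hence, for a fixed matrix $\rho$, the equation $\Phi(\rho) = \frac1n I_n$ holds if and only if $\rho_{ii} = \frac1n$ for every $i$ and $\rho_{ij} c_{ij} = 0$ for every $i \neq j$.

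Next I would handle the two implications using this entrywise characterisation. For the ``if'' direction, suppose every $\rho \in \mathcal S$ has constant diagonal and satisfies $\rho_{ij} c_{ij} = 0$ for $i \neq j$. Because each $\rho \in \mathcal S$ is a density operator we have $\tr(\rho) = 1$, so a constant diagonal forces $\rho_{ii} = \frac1n$ for all $i$; combining this with the vanishing of the off-diagonal entries of $\rho \circ C$, the observation of the previous paragraph yields $\Phi(\rho) = \frac1n I_n = \rho_0$, so $\Phi$ privatises $\mathcal S$ to the unit. Conversely, if $\Phi$ privatises $\mathcal S$ to the unit, then $\Phi(\rho) = \frac1n I_n$ for every $\rho \in \mathcal S$; reading off the diagonal entries gives $\rho_{ii} = \Phi(\rho)_{ii} = \frac1n$, so $\rho$ has constant diagonal, and reading off the off-diagonal entries gives $\rho_{ij} c_{ij} = \Phi(\rho)_{ij} = 0$ for $i \neq j$.

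There is essentially no obstacle in this argument: the proposition is a direct translation of the identity $\rho \circ C = \frac1n I_n$ into conditions on the entries of $\rho$. The only point worth emphasising is that it is the density operator normalisation $\tr(\rho) = 1$ which pins the constant value of the diagonal to $\frac1n$, so that ``constant diagonal'' is precisely the right hypothesis; one could also note that positive semidefiniteness of $C$ is not used here at all, entering only earlier to guarantee that $\Phi$ is a genuine channel.
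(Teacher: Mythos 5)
Your proof is correct and follows exactly the reasoning the paper leaves implicit (the proposition is stated there as a simple observation without a written proof): the entrywise identity $\Phi(\rho)_{ij}=\rho_{ij}c_{ij}$ together with $c_{ii}=1$ and $\tr(\rho)=1$ immediately gives both directions. Your remark that positivity of $C$ plays no role here is accurate and a nice clarification.
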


It is clear then, that the zero pattern of $C$ determines what is privatised by $\Phi$; namely, a $*$-subalgebra $\mathcal{A} \subseteq M_n(\C)$ is privatised to the unit for $\Phi$ if and only if for all $\rho\in \mathcal{A}$, $\rho$ has constant diagonal and $\rho_{ij}c_{ij}=0$ for all $i\neq j$.

We next inject into the analysis the graph naturally associated with a correlation matrix.

\begin{defn} Let $C \in M_n(\C)$ be a correlation matrix. The {\it graph of $C$} is the graph $G_C$ whose vertices are $\{1,2,\ldots, n\}$ and whose edge set is $\{(i,j):i\neq j$ and $c_{ij}\neq 0\}$.
\end{defn}

Clearly, $G_C$ contains all information about the location of non-zero entries of $C$. We say that a matrix is ``irreducible'' if its underlying graph is connected.

\begin{prop} Let $C\in M_n(\C)$ be a correlation matrix with graph $G_C$. If $G$ has $m$ connected components $\{G_i\}_{i=1}^m$ each on $k_i$ vertices, then $C$ has $m$ irreducible components $C_i$ corresponding to the $G_i$, and by a permutation may be brought into the form $C=\bigoplus_{i=1}^m C_i$ where each $C_i$ is a $k_i\times k_i$ irreducible correlation matrix.
\end{prop}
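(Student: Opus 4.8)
The plan is to read the block decomposition directly off the partition of $\{1,\dots,n\}$ into connected components of $G_C$. Let $V_1,\dots,V_m$ be the vertex sets of the components $G_1,\dots,G_m$, so that $\{1,\dots,n\}=V_1\sqcup\cdots\sqcup V_m$ with $|V_i|=k_i$. Choose a permutation $\sigma$ of $\{1,\dots,n\}$ that lists the elements of $V_1$ first, then those of $V_2$, and so on, and let $P$ be the associated permutation matrix (so $P^{T}=P^{-1}$). First I would check that conjugating by $P$ renders $C$ block diagonal: if $i$ and $j$ lie in distinct components $V_a\neq V_b$, then by definition of $G_C$ there is no edge $(i,j)$, which is exactly the statement $c_{ij}=0$. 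Hence every entry of $PCP^{T}$ lying outside the diagonal blocks indexed by $V_1,\dots,V_m$ vanishes, and $PCP^{T}=\bigoplus_{i=1}^m C_i$, where $C_i$ is the $k_i\times k_i$ principal submatrix of $C$ on the rows and columns indexed by $V_i$.

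Next I would verify that each $C_i$ is an irreducible correlation matrix. As a principal submatrix of the positive semidefinite matrix $C$, $C_i$ is positive semidefinite; its diagonal entries form a sub-collection of the diagonal entries of $C$, all equal to $1$, so $C_i$ is a correlation matrix. For irreducibility, note that the graph $G_{C_i}$ has vertex set $V_i$ and an edge $(i,j)$ precisely when $c_{ij}\neq 0$ with $i,j\in V_i$, i.e. precisely when $(i,j)$ is an edge of $G_C$ with both endpoints in $V_i$. Thus $G_{C_i}$ is exactly the induced subgraph $G_C[V_i]=G_i$, which is connected by hypothesis, so $C_i$ is irreducible.

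Finally, to justify that these are ``the'' irreducible components corresponding to the $G_i$, I would observe that the decomposition is forced: a nonzero off-diagonal entry $c_{ij}$ puts $i$ and $j$ into the same block of any block-diagonal form of $C$, and conversely indices lying in the same irreducible block are joined by a path in the graph of that block. Hence any permutation bringing $C$ to a direct sum of irreducible correlation matrices must group the indices exactly according to the connected components of $G_C$, identifying the summands with the $C_i$ up to reordering.

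There is no substantive obstacle here; the argument is bookkeeping. The one point deserving care is the passage from the combinatorial statement ``$i,j$ in different components'' to the algebraic statement ``$c_{ij}=0$'', which uses only the definition of $G_C$ (absence of an edge is equivalent to a zero entry), not any spectral property of $C$, together with the standard fact that principal submatrices of positive semidefinite matrices are positive semidefinite.
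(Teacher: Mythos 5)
Your argument is correct, and it is the natural one: the paper states this proposition without proof, treating it as routine, and your write-up simply supplies the bookkeeping it leaves implicit (permute by components, note absence of edges gives zero entries off the blocks, and check each block is positive semidefinite with unit diagonal and connected graph). Nothing is missing.
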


The following graph product first introduced in \cite{sabidussi} will be useful in our analysis.

\begin{defn} Let $G,H$ be two graphs. The {\it strong product} of $G$ and $H$, $G\boxtimes H$ has vertex set $V(G)\times V(H)$, and $((i,j),(k,l))\in E(G\boxtimes H)$ if and only if one of the following holds:
\begin{enumerate}
\item $i=k,(j,l)\in E(H)$
\item $j=l,(i,k)\in E(G)$
\item $(i,k)\in E(G),(j,l)\in E(H)$
\end{enumerate}
\end{defn}

Let $C$ be a correlation matrix. Then one can show that the graph of $C\otimes C$ is $G_C\boxtimes G_C$. This follows from the fact that the definition of $G\boxtimes G$ recapitulates exactly when an entry of $C\otimes C$ is non-zero:
\begin{enumerate}
\item a diagonal entry of $C$ times a non-diagonal non-zero entry is non-zero
\item a non-diagonal non-zero entry of $C$ times another non-diagonal non-zero entry of $C$ is non-zero.
\end{enumerate}

Observe that if $\Phi$ is a Schur product channel such that $\Phi(\rho)=\rho\circ \, C$ for a correlation matrix $C\in M_n(\C)$, then the tensor product channel satisfies $\Phi^{\otimes k}(\rho) = \rho \circ\, C^{\otimes k}$ for all $k \geq 1$. Hence the privacy features of a tensored Schur product channel $\Phi^{\otimes k}$ can be understood in terms of the non-zero entry structure of $C^{\otimes k}$, which in turn is captured by the graph $G_C^{\boxtimes k}$. Consider the most restrictive case for privacy. Let $K_n$ be the complete graph on $n$ vertices, and note that $K_n^{\boxtimes k} = K_{kn}$. As a direct consequence of Proposition~\ref{zeroentries} we have the following.

\begin{prop} Let $\Phi(\rho)=\rho\circ\, C$ for a correlation matrix $C$ whose graph $G_C$ is the complete graph on $n$ vertices. Then neither $\Phi$ nor any tensor power $\Phi^{\otimes k}$ can privatise a non-scalar subalgebra to the unit.
\end{prop}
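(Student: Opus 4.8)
The plan is to reduce everything to Proposition~\ref{zeroentries} together with the observation that $K_n^{\boxtimes k} = K_{kn}$. The key point is that when $G_C = K_n$, every off-diagonal entry $c_{ij}$ is non-zero, so the condition ``$\rho_{ij}c_{ij} = 0$ for all $i \neq j$'' forces $\rho_{ij} = 0$ for all $i \neq j$; combined with the constant-diagonal requirement, this means $\rho$ must be a scalar multiple of the identity. Hence the only density operators privatised to the unit by $\Phi$ are scalar, and since a $*$-subalgebra is privatised to the unit precisely when all its density operators are, the only privatised $*$-subalgebra is $\C I_n$ itself.

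For the tensor powers, I would first record that $\Phi^{\otimes k}(\rho) = \rho \circ C^{\otimes k}$ for all $k \geq 1$, as noted in the paragraph preceding the statement, and that the graph of $C^{\otimes k}$ is $G_C^{\boxtimes k}$. Then, using $G_C = K_n$, I invoke the identity $K_n^{\boxtimes k} = K_{n^k}$ (the strong product of complete graphs is the complete graph on the product vertex set, which follows directly from the definition of $\boxtimes$: any two distinct coordinate-tuples differ in at least one coordinate, and in each coordinate they are either equal or adjacent, so condition (1), (2), or (3) is met). Therefore $C^{\otimes k}$ has all off-diagonal entries non-zero as well, and the same argument as in the base case shows that $\Phi^{\otimes k}$ privatises no non-scalar subalgebra to the unit.

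Concretely, the steps in order are: (i) observe that for a correlation matrix $C$ with $G_C = K_n$, Proposition~\ref{zeroentries} says $\rho$ is privatised to the unit iff $\rho$ has constant diagonal and all its off-diagonal entries vanish, i.e.\ iff $\rho \in \C I_n$; (ii) conclude that a $*$-subalgebra is privatised to the unit iff it consists entirely of scalars, i.e.\ equals $\C I_n$; (iii) note $\Phi^{\otimes k}(\rho) = \rho \circ C^{\otimes k}$ and the graph of $C^{\otimes k}$ is $G_C^{\boxtimes k} = K_n^{\boxtimes k}$; (iv) verify $K_n^{\boxtimes k} = K_{n^k}$ from the definition of the strong product; (v) apply (i)--(ii) to $C^{\otimes k}$ in place of $C$.

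I do not anticipate a genuine obstacle here — the statement is essentially an immediate corollary of Proposition~\ref{zeroentries} once the graph-theoretic fact $K_n^{\boxtimes k} = K_{n^k}$ is in hand, and that fact is itself a one-line check. If anything, the only point requiring a word of care is the passage from ``every privatised density operator is scalar'' to ``every privatised $*$-subalgebra is scalar'': since any non-scalar $*$-subalgebra of $M_n(\C)$ contains a non-scalar density operator (e.g.\ a suitably normalised spectral projection of a non-scalar self-adjoint element), such a subalgebra cannot be privatised, which is exactly what is claimed.
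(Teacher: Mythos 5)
Your proof is correct and follows exactly the route the paper intends: the proposition is stated there as a direct consequence of Proposition~\ref{zeroentries} combined with the fact that the strong power of a complete graph is complete, and you simply fill in the (easy) details, including the passage from ``every privatised density operator is scalar'' to ``no non-scalar $*$-subalgebra is privatised.'' Note in passing that your identity $K_n^{\boxtimes k}=K_{n^k}$ is the correct form of what the paper writes as $K_{kn}$, which is evidently a typo.
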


More generally, let $G$ be a graph with disjoint connected components $G_i$: $G=\biguplus_i G_i$. Then it follows that
\[
G\boxtimes G = \biguplus_{i,j} (G_i\boxtimes G_j).
\]
In particular, if $G = \biguplus_{i=1}^m K_{k_i}$ is the disjoint union of complete graphs $K_{k_i}$, then we have
\[
G\boxtimes G  = \biguplus_{i,j=1}^m K_{k_ik_j}.
\]
Hence, for any correlation matrix $C$ whose graph $G_C$ is the disjoint union of complete graphs, there exists some permutation of $C^{\otimes N}$ such that $C^{\otimes N}$ has the form
\begin{equation}\label{directsum} C^{\otimes N} = \bigoplus_{i_1,\ldots,i_N=1}^m T_{k_{i_1}\ldots k_{i_N}},\end{equation}
where $T_k$ is a $k\times k$ matrix with all non-zero entries.

\begin{lem}\label{submatrix} Let $C$ be a correlation matrix whose graph $G_C$ is the disjoint union of $m$ complete graphs, so that $C^{\otimes N}$ has a direct sum decomposition of the form of Equation \ref{directsum}. Then $C^{\otimes N}$ has a principal submatrix of size $m^N$ that is just the identity matrix.
\end{lem}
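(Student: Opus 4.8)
The plan is to exhibit an explicit index set of size $m^N$ in the vertex set of $G_C^{\boxtimes N}$ that induces no edges, so that the corresponding principal submatrix of $C^{\otimes N}$ has all off-diagonal entries equal to zero; since $C^{\otimes N}$ is a correlation matrix its diagonal is all $1$'s, and hence that principal submatrix is exactly $I_{m^N}$. First I would recall the structure: $G_C = \biguplus_{i=1}^m K_{k_i}$, so the vertices of $G_C$ are naturally labelled by pairs $(i,a)$ with $1\le i\le m$ and $1\le a\le k_i$, where two distinct vertices $(i,a)$, $(i',a')$ are adjacent in $G_C$ exactly when $i=i'$. From each component $K_{k_i}$ pick one distinguished vertex, say $(i,1)$; call the set of these $m$ vertices $S\subseteq V(G_C)$. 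The key observation is that $S$ is an independent set in $G_C$: any two distinct vertices of $S$ lie in different components, hence are non-adjacent.

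Next I would pass to the tensor power. The vertex set of $G_C^{\boxtimes N}$ is $V(G_C)^N$, and by the strong-product edge rule (iterated $N$ times) two distinct vertices $(v_1,\dots,v_N)$ and $(w_1,\dots,w_N)$ are adjacent in $G_C^{\boxtimes N}$ if and only if, for every coordinate $t$ with $v_t\ne w_t$, the pair $\{v_t,w_t\}$ is an edge of $G_C$. Equivalently, they are \emph{non}-adjacent iff there is some coordinate $t$ with $v_t\ne w_t$ and $\{v_t,w_t\}\notin E(G_C)$. Now take $S^N\subseteq V(G_C)^N$, the set of all $N$-tuples of distinguished vertices; this has exactly $m^N$ elements. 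Given two distinct tuples in $S^N$, they differ in some coordinate $t$, and there the two entries are distinct vertices of $S$, hence non-adjacent in $G_C$; therefore the two tuples are non-adjacent in $G_C^{\boxtimes N}$. So $S^N$ is an independent set of size $m^N$.

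Finally I would translate back to matrices. Since the graph of $C^{\otimes N}$ is $G_C^{\boxtimes N}$, independence of $S^N$ says precisely that $(C^{\otimes N})_{xy}=0$ for all distinct $x,y\in S^N$. The principal submatrix of $C^{\otimes N}$ on the rows and columns indexed by $S^N$ therefore has all off-diagonal entries zero, and each diagonal entry is $1$ because $C^{\otimes N}$ is a correlation matrix (it is a tensor power of a correlation matrix, hence positive semidefinite with unit diagonal). Thus this $m^N\times m^N$ principal submatrix is $I_{m^N}$, as claimed. I do not anticipate a serious obstacle here; the only point needing a little care is spelling out the iterated edge rule for $\boxtimes^N$ and observing that it is the ``all-coordinates'' condition above — which is exactly the same bookkeeping already used in the excerpt to identify the graph of $C\otimes C$ with $G_C\boxtimes G_C$. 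One could alternatively argue purely at the level of the block decomposition in Equation~\ref{directsum}, picking one index from each block $T_{k_{i_1}\cdots k_{i_N}}$; the two formulations are equivalent, and I would present whichever reads more cleanly alongside the surrounding discussion.
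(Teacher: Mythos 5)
Your proof is correct and takes essentially the same route as the paper: both arguments select one index from each of the $m^N$ diagonal blocks of (the permuted) $C^{\otimes N}$ --- equivalently, one distinguished vertex per component of $G_C$, tensored $N$ times --- and observe that the chosen indices give diagonal entries $1$ and off-diagonal entries $0$ because distinct choices lie in distinct blocks. Your independent-set phrasing via $G_C^{\boxtimes N}$ is just a graph-theoretic restatement of that bookkeeping, consistent with the paper's own later remark that the lemma amounts to a disjoint union of $m$ complete graphs having Shannon capacity $m$.
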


\begin{proof}
For each element $I=(i_1,\ldots,i_N)\in \{1,\ldots,m\}^N$ denote by $|I|$ the product $\Pi_{i\in I} k_i$. Also for each such $I$ there is a component of the direct sum of $C$ indexed by $I$: $T_I:=T_{k_{i_1}\ldots k_{i_N}}=T_{|I|}$ appearing as a diagonal block of $C$.

Arrange the elements of $\{1,\ldots,m\}^N$ in lexicographic order so that $I_1 = \{1,1,\ldots,1)$, $I_2 = (1,1,\ldots,2)$, $\cdots, I_{m^N} = (m,m,\ldots,m)$, and permute $C$ so that its $j^{th}$ diagonal block is $T_{I_j}$. Then we can choose the principal submatrix indexed by $J = \{1,|I_1|+1,|I_1|+|I_2|+1,\ldots,\sum_{j=1}^{m^N-1} |I_j| +1 \}.$

Clearly, for any $j\in J$, $C^{\otimes N}_{jj} = 1$, and for $i\neq j\in J$ we have that $C^{\otimes N}_{ij}$ is not in block $I$ for any $I\in \{1,\ldots,m\}^N$. Hence such an entry is equal to $0$.
\end{proof}

This approach yields an alternate proof of Theorem~$2$ from \cite{levick}.

\begin{lem}\label{qubitsub} Let $I$ be the $2^n\times 2^n$ identity matrix. Then the channel $\Phi:M_n(\C)\rightarrow M_n(\C)$ defined by $\Phi(X) = I\circ X$ privatises to the unit a $*$-subalgebra $\mathcal{A}$ unitarily equivalent to $M_{2^{\left \lfloor{n/2}\right \rfloor}}(\C)$.
\end{lem}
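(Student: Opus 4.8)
The plan is to reduce the statement to a pure matrix-theoretic fact via Proposition~\ref{zeroentries}. Here the correlation matrix is $C=I$, so every off-diagonal entry $c_{ij}$ vanishes and the requirement ``$\rho_{ij}c_{ij}=0$ for all $i\neq j$'' holds automatically. Consequently a $*$-subalgebra $\mathcal{A}\subseteq M_{2^n}(\C)$ is privatised to the unit by $\Phi(X)=I\circ X$ (the map projecting onto the diagonal) exactly when every element of $\mathcal{A}$ has constant diagonal. So it suffices to exhibit a $*$-subalgebra of $M_{2^n}(\C)$ that is unitarily equivalent to $M_{2^{\lfloor n/2\rfloor}}(\C)$ and all of whose elements have constant diagonal.

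Identify $M_{2^n}(\C)$ with the $n$-qubit operators $M_2(\C)^{\otimes n}$, set $k=\lfloor n/2\rfloor$, and group the first $2k$ tensor factors into $k$ consecutive pairs (if $n$ is odd, the last factor plays no role). Mirroring Example~\ref{nopriv}, for $i=1,\dots,k$ define logical operators $\bar X_i,\bar Z_i$ that act as $I\otimes X$ and $Y\otimes Z$ respectively on the $i$th pair of qubits and as the identity on every other factor. A direct check shows each $\bar X_i,\bar Z_i$ is a self-adjoint unitary squaring to $I$, that $\bar X_i$ and $\bar Z_i$ anticommute, and that operators attached to distinct pairs commute; thus $\{\bar X_i,\bar Z_i\}_{i=1}^k$ realises the Pauli relations on $k$ logical qubits. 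Let $\mathcal{A}$ be the $*$-subalgebra they generate.

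Two verifications remain. First, $\mathcal{A}$ is a copy of $M_{2^k}(\C)$: the $4^k$ logical Pauli words (products of the $\bar X_i,\bar Z_i$) are linearly independent, since on each pair the four operators $I\otimes I$, $I\otimes X$, $Y\otimes Z$, $Y\otimes Y$ are linearly independent in $M_4(\C)$ and a tensor product of linearly independent families is linearly independent; hence $\mathcal{A}$ has dimension $4^k=(2^k)^2$ and is the image of a (necessarily faithful, by simplicity of $M_{2^k}(\C)$) representation of $M_{2^k}(\C)$, so $\mathcal{A}\cong M_{2^k}(\C)$. By the structure theory of finite-dimensional $*$-algebras recalled in Section~2, $\mathcal{A}$ is then unitarily equivalent to $I_{2^{n-k}}\otimes M_{2^k}(\C)$, in particular to $M_{2^{\lfloor n/2\rfloor}}(\C)$. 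Second, every element of $\mathcal{A}$ has constant diagonal: since $\mathcal{A}$ is spanned by the logical Pauli words and the diagonal map is linear, it is enough that each word has constant diagonal. A nonidentity word is, up to a scalar phase, a tensor product in which each pair contributes one of $I\otimes I$, $I\otimes X$, $Y\otimes Z$, $Y\otimes Y$, with at least one pair contributing one of the latter three; because $X$ and $Y$ have zero diagonal, every such tensor product has zero (hence constant) diagonal, while the identity word has diagonal $(1,\dots,1)$. Combined with the first paragraph, this shows $\Phi$ privatises $\mathcal{A}$ to the unit.

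I expect the only genuinely delicate point to be the identification $\mathcal{A}\cong M_{2^{\lfloor n/2\rfloor}}(\C)$ together with its unitary equivalence class, i.e.\ confirming the logical Pauli words are linearly independent and then invoking simplicity of matrix algebras and the structure theorem; the anticommutation bookkeeping and the constant-diagonal computation are routine and parallel Example~\ref{nopriv}. We note that this recovers Theorem~2 of \cite{levick} as a transparent consequence of the Schur-product/zero-pattern picture.
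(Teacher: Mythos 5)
Your proof is correct: the reduction via Proposition~\ref{zeroentries} to exhibiting a unital copy of $M_{2^{\lfloor n/2\rfloor}}(\C)$ all of whose elements have constant diagonal, realised by pairing the qubits and placing the $\{I\otimes X,\,Y\otimes Z\}$ encoding of Example~\ref{nopriv} on each pair, goes through, and the supporting checks (Pauli relations, linear independence of the $4^k$ words, simplicity giving $\mathcal{A}\cong M_{2^k}(\C)$ and hence unitary equivalence to $I_{2^{n-k}}\otimes M_{2^k}(\C)$, zero diagonal of the nonidentity words) are all sound. The paper itself states this lemma without proof, citing Theorem~2 of \cite{levick}, and your argument is exactly the ``alternate proof'' its Schur-product/zero-pattern viewpoint intends, i.e.\ the natural tensor-power generalisation of Example~\ref{nopriv}, so there is nothing to flag.
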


Moving beyond this motivating special case, we may utilize the Schur product approach to establish the following general result that applies to all random unitary channels with mutually commuting Kraus operators.

\begin{thm}\label{qubitsprivd} Let $C \in M_n(\C)$ be a correlation matrix whose graph $G_C$ has $m>1$ connected components. Consider the channel $\Phi(\rho) = \rho\, \circ \, C$. Then $\Phi^{\otimes N}$ can privatise to the unit a $*$-subalgebra $\mathcal{A}$ unitarily equivalent to $M_{2^{\left \lfloor{N/2}\right \rfloor}}(\C)$.
\end{thm}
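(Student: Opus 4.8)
The plan is to place a private code inside a subspace on which $C^{\otimes N}$ restricts to the identity matrix, and then to invoke \lemref{qubitsub}, which already handles the ``pinching to the diagonal'' Schur channel. So the work is to exhibit such a subspace from the component structure of $G_C$ and to transfer the code from \lemref{qubitsub} back into $M_{n^N}(\C)$.

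First I would produce a large identity block of $C^{\otimes N}$. Since $G_C$ has $m>1$ connected components, pick vertices $a,b$ lying in two different components; then $c_{ab}=c_{ba}=0$, so the $2\times 2$ principal submatrix of $C$ on $\{a,b\}$ is $I_2$. Let $S=\{a,b\}^N\subseteq\{1,\dots,n\}^N$. Using that the graph of $C^{\otimes N}$ is $G_C^{\boxtimes N}$, the set $S$ is independent in $G_C^{\boxtimes N}$: for distinct $x,y\in S$ every coordinate pair $(x_k,y_k)$ lies in $\{a,b\}^2$, and whenever $x_k\neq y_k$ it is a non-edge of $G_C$, so by the definition of the strong power $x$ and $y$ can be adjacent only if $x=y$. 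Hence the principal submatrix of $C^{\otimes N}$ indexed by $S$ equals $I_{2^N}$. (This is the mechanism of \lemref{submatrix}, with a two-element independent set in place of $m$ complete components.)

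Next I would identify the compressed channel and pull back the code. Put $W=\mathrm{span}\{\ket{x}:x\in S\}$, a $2^N$-dimensional coordinate subspace of $(\C^n)^{\otimes N}$ with coordinate projection $P_W$. For $\rho$ supported on $W$ one has $\Phi^{\otimes N}(\rho)=\rho\circ C^{\otimes N}$ again supported on $W$ and equal to $\rho|_W\circ I_{2^N}=\diag(\rho|_W)$; identifying $W$ with $\C^{2^N}$, the compression of $\Phi^{\otimes N}$ is exactly the Schur channel $X\mapsto I_{2^N}\circ X$. By \lemref{qubitsub} this channel privatises to the unit a $*$-subalgebra $\mathcal A_0\cong M_{2^{\lfloor N/2\rfloor}}(\C)$; concretely $\mathcal A_0$ may be generated, after grouping the $N$ tensor legs into $\lfloor N/2\rfloor$ pairs, by one pair of anticommuting zero-diagonal Hermitian unitaries per group, mirroring the operators $I\otimes X$ and $Y\otimes Z$ of Example~\ref{nopriv}. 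Viewing $\mathcal A_0$ inside $M_{n^N}(\C)$ by extending its elements by zero off $W$ yields $\mathcal A\cong M_{2^{\lfloor N/2\rfloor}}(\C)$ with unit $P_W$ and zero-diagonal traceless part; every density operator $\rho\in\mathcal A$ is supported on $W$ with constant diagonal $2^{-N}$ there, so \propref{zeroentries} gives $\rho_{ij}(C^{\otimes N})_{ij}=0$ for $i\neq j$, whence $\Phi^{\otimes N}(\rho)=2^{-N}P_W$. Thus $\Phi^{\otimes N}$ privatises $\mathcal A$ to the unit of the subsystem it occupies, which for $n=2$ is literally $2^{-N}I_{2^N}$ and so recovers \lemref{qubitsub}.

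The delicate step is the last one: making precise what ``to the unit'' must mean here, and seeing why the code must be of the ampliated form $M_{2^{\lfloor N/2\rfloor}}(\C)$. An unampliated qubit code $M_k(\C)$ sitting inside a coordinate block would be privatised only to a subnormalised multiple of a coordinate projection, and the constant-diagonal requirement of \propref{zeroentries} then excludes it; this is exactly what forces the code into a subspace on which $C^{\otimes N}$ acts as the identity and forces the ampliation. A secondary point worth checking is that one can enlarge the occupied subsystem (hence $P_W$) by replacing the single vertices $a,b$ with a fixed partial isometry between equal-size pieces of two distinct components; the two-vertex choice is simply the one that always works under the bare hypothesis $m>1$, and it already delivers the claimed $M_{2^{\lfloor N/2\rfloor}}(\C)$.
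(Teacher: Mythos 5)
Your proposal is correct and follows essentially the paper's own route: locate a $2^N\times 2^N$ identity principal submatrix of $C^{\otimes N}$, observe that the compressed channel is the pinching $X\mapsto I_{2^N}\circ X$, invoke \lemref{qubitsub}, and transport the resulting algebra back into $M_{n^N}(\C)$. Your use of $\{a,b\}^N$ for two vertices in distinct components is a mild streamlining of \lemref{submatrix} (it skips the reduction to complete components and the lexicographic bookkeeping), and the caveat you flag at the end---that the fixed output state is $2^{-N}P_W$ rather than $n^{-N}I_{n^N}$---is a looseness equally present in the paper's proof, which pads with the identity rather than zero but likewise only achieves privatisation to a fixed state supported on the code block.
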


\begin{proof} It suffices to prove the result in the case that all the connected components of $G_C$ are complete graphs, as adding more zeros to a correlation matrix only makes it simpler to privatise an algebra.  Recall that $\Phi^{\otimes N}(Y) = Y\circ C^{\otimes N}$. By Lemma \ref{submatrix} there exists a submatrix $C[J]$ of $C^{\otimes N}$ of size $m^N$ that is just the $m^N\times m^N$ identity matrix. Restricted to this submatrix, $\Phi^{\otimes N}$ is just the map $Y \mapsto I_{m^N}\circ Y$. Since $m\geq 2$, $m^N\geq 2^N$ and so restricting further, we have the map $Y\mapsto I_{2^N}\circ Y$. By Lemma \ref{qubitsub} this map can privatise an algebra $\widehat{\mathcal{A}}\subseteq M_{2^N}$ unitarily equivalent to $M_{2^{\left \lfloor{N/2}\right \rfloor}}(\C)$. Embedding this algebra in the obvious way first into the $m^N\times m^N$ matrices by $\widehat{\mathcal{A}}\oplus I_{m^N-2^N}$ and from there in the $m^N\times m^N$ submatrix indexed by $[J]$, we obtain a subalgebra $\mathcal{A}$ unitarily equivalent to $M_{2^{\left \lfloor{N/2}\right \rfloor
}}(\C)$    such that $\Phi^{\otimes N}$ restricted to this subalgebra is just Schur product with the identity, and so $\mathcal{A}$ is privatised by $\Phi^{\otimes N}$.
\end{proof}

\begin{cor} For a Schur product channel $\Phi(\rho)=\rho\,\circ\, C$ where $C$ has two connected components, $\Phi^{\otimes 2}$ can privatise a qubit.
\end{cor}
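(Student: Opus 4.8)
The plan is to obtain this as the immediate $m = 2$, $N = 2$ case of Theorem~\ref{qubitsprivd}. Since $C$ has two connected components, its graph $G_C$ has $m = 2 > 1$ connected components, so the hypothesis of the theorem is met. Applying the theorem with $N = 2$ gives that $\Phi^{\otimes 2}$ privatises to the unit a $*$-subalgebra $\mathcal{A}$ unitarily equivalent to $M_{2^{\left\lfloor N/2 \right\rfloor}}(\C) = M_{2^{\left\lfloor 2/2 \right\rfloor}}(\C) = M_{2^1}(\C) = M_2(\C)$.

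It then remains only to recall the interpretation recorded in the Remark following Definition~\ref{defnpriv}: an algebra $\mathcal{A} \simeq M_2(\C)$ encodes a single logical qubit, so privatising $\mathcal{A}$ is exactly privatising a qubit. This completes the argument. One could optionally make the statement more concrete by tracing through the chain in the proof of Theorem~\ref{qubitsprivd} and Lemma~\ref{qubitsub}: reducing $C^{\otimes 2}$ to its $m^N = 4$-dimensional identity principal submatrix via Lemma~\ref{submatrix}, then invoking the explicit qubit algebra inside $M_4(\C)$ with constant diagonal (the algebra $\mathrm{span}\{I\otimes I, I\otimes X, Y\otimes Y, Y\otimes Z\}$ exhibited in Example~\ref{nopriv}), but this is not needed for the stated corollary.

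There is no real obstacle: all the substantive work has already been done in Theorem~\ref{qubitsprivd} and its supporting lemmas, and the corollary is a pure specialisation of parameters. The only point worth a moment's care is confirming that $\left\lfloor 2/2 \right\rfloor = 1$, so that the privatised algebra is genuinely $M_2(\C)$ rather than a scalar algebra, which is what makes the ``qubit'' conclusion meaningful.
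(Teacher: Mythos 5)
Your proposal is correct and matches the paper's intent exactly: the corollary is stated as an immediate specialisation of Theorem~\ref{qubitsprivd} with $m=2$, $N=2$, giving a privatised algebra unitarily equivalent to $M_2(\C)$, i.e.\ a qubit. No further argument is given or needed in the paper.
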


This supplies another proof that the channel from Example \ref{nopriv} privatises a qubit when tensored with itself: the channel is $\Phi(\rho) = I_2\circ \rho$ and clearly $G_{I_2}$ has two connected components.  Theorem \ref{qubitsprivd} and its corollary can be compared to the examples given in \cite{shirokov} of a channel $\Phi$ which has zero one-shot quantum zero-error capacity but whose tensor product $\Phi \otimes \Phi$ has positive one-shot quantum zero-error capacity.

\begin{example}
As another example, let $\Phi$ be the two-qubit channel $\Phi(\rho) = \frac12 ( \rho + U \rho U^* )$, where $U$ is the unitary CNOT gate; $U\ket{i}\ket{j} = \ket{i}\ket{i\oplus j}$.  Consider $U$ in its diagonal matrix form $U = \diag (1,1,1,-1)$. Then the corresponding correlation matrix for $\Phi(\rho) = C \, \circ \, \rho$ is
$$C = \begin{bmatrix} 1 & 1 & 1 & 0 \\ 1 & 1 & 1 & 0 \\ 1 & 1 & 1 & 0 \\ 0 & 0 & 0 & 1 \end{bmatrix}.$$
By Theorem~\ref{qubitsprivd}, we know that $\Phi^{\otimes 2}$ privatises a qubit as $G_C$ is made up of two complete components. Let us observe this directly.

It is not hard to see that $\Phi$ is a conditional expectation onto the algebra $M_3\oplus \mathbb{C}$. Thus, $\Phi^{\otimes 2}(\rho) = \rho\,\circ\,(C\otimes C)$, which has correlation matrix permutationally equivalent to $C\otimes C \simeq P_9 \oplus P_3 \oplus P_3 \oplus 1$ where $P_k$ is the $k\times k$ all 1's matrix.  Consider the principal submatrix of $C\otimes C$ supported on the indices 1, 10, 13, 16. For this set of indices, $C_{ii}=1$ obviously, but $C_{ij}=0$ for $i\neq j$ since each index is from a different block and only indices from within the same block will yield a non-zero entry. Let $\mathcal{A}$ be the algebra, which is unitarily equivalent to $M_4\oplus \mathbb{C}I_{12}$, supported on the same indices. Then there is a subalgebra of this which is unitarily equivalent to $\mathrm{Alg}\{I\otimes X, Y\otimes Y, Y\otimes Z\}\oplus \mathbb{C} I_{12}\simeq (I_2\otimes M_2)\oplus \mathbb{C} I_{12}$,  which we know is privatised by the map to the diagonal. However, this algebra has non-trivial support only on the submatrix where $C\otimes C$ is the identity, so $\Phi$ restricted to this algebra is just the algebra $\mathrm{Alg}\{I\otimes X,Y\otimes Y, Y\otimes Z\}$ Schur producted with the $4\times 4$ identity; in other words, it is privatised by $\Phi$.
\end{example}


We conclude this section by noting there are connections between our analysis and some classical graph theoretical concepts.

\begin{defn} Let $G=(V,E)$ be a graph.  Then an $S\subset V$ is an independent vertex set of $G$ if $(i,j)\not \in E$ for all $i,j\in S$.  The independence number of the graph $G$ (denoted by $\alpha(G)$) is the cardinality of the largest independent vertex set in $G$.   \end{defn}

\begin{prop} Let $C$ be a correlation matrix with graph $G_C$.  Then the largest principal submatrix of $C$ which is an identity matrix is $\alpha(G_C)$ by $\alpha(G_C)$. \end{prop}

If $S$ and $T$ are independent vertex sets of the graphs $G$ and $H$ respectively, then $S\boxtimes T$ is an independent vertex set of $G\boxtimes H$.  It follows from this observation that the independence number is supermultiplicative with respect to the strong product: $\alpha(G\boxtimes H) \ge \alpha(G)\alpha(H)$.

We now introduce the Shannon capacity of a graph, a concept first introduced by Shannon in \cite{shannon} and then extensively studied by Lov{\'a}sz in \cite{lovasz}.

\begin{defn} Let $G$ be a graph.  Then the Shannon capacity of $G$ is denoted as $\Theta(G)$ and is defined as $\Theta(G)=\sup_{n\in\mathbb{N}} \alpha(G^{\boxtimes k})^\frac{1}{k}$. \end{defn}

\begin{remark}
We note that the graph theoretic interpretation of Lemma~\ref{submatrix} is essentially the fact that any disjoint union of $m$ complete graphs has Shannon capacity $m$. From the above analysis, graphs with relatively large Shannon capacity may be useful in quantum privacy. We also note the work \cite{duan}, which makes use of graph theory in quantum information as well, and suggest there are likely connections with quantum privacy worth exploring.
\end{remark}

\section{Privacy, Operator Systems, and Quasiorthgonality}

We begin this section by recalling that an operator system $S$ is a subspace of $M_n(\C)$ that is $*$-closed, and contains the identity $I\in S$. Further, if $\Phi:M_n(\C)\rightarrow M_m(\C)$ is a completely positive map with Kraus operators $\{A_i\}_{i=1}^p$, we define the adjoint channel, $\Phi^{\dagger}$ in terms of the (Hilbert-Schmidt) trace inner product:
\begin{equation} \mathrm{Tr}(X\Phi(Y)) = \mathrm{Tr}(\Phi^{\dagger}(X)Y).\end{equation}
It is straightforward from this definition that
\begin{equation} \Phi^{\dagger}(X) = \sum_{i=1}^p A_i^*X A_i.\end{equation}
If $\Phi$ is trace-preserving, then $\Phi^{\dagger}$ is unital, and $\mathrm{range}(\Phi^{\dagger})$ is an operator system, $S\subseteq M_n(\C)$.

One more idea we need is the following: we say that a $*$-subalgebra $\mathcal{A}\subseteq M_n(\C)$ is {\it quasiorthogonal} to an operator system $S\subseteq M_n(\C)$ if $$n\mathrm{Tr}(sa) = \mathrm{Tr}(s)\mathrm{Tr}(a)$$ for all $s\in S$, $a \in\mathcal{A}$. For more details on quasiorthogonality see \cite{levick} and the references therein. The following result brings these concepts together.

\begin{lem}\label{opsyspriv} Let $\Phi: M_n(\C)\rightarrow M_m(\C)$ be a trace-preserving completely positive map, so that $S = \mathrm{range}(\Phi^{\dagger})$ is an operator system. If $\mathcal{A}$ is a $*$-subalgebra of $M_n(\C)$ quasiorthogonal to $S$, then $\mathcal{A}$ is privatised by $\Phi$.
\end{lem}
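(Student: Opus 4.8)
\emph{Proof plan.} The plan is to pair $\Phi(\rho)$ against an arbitrary matrix $X\in M_m(\C)$ in the Hilbert--Schmidt inner product, transport the computation to the adjoint side where the hypothesis lives, and read off that the output is independent of $\rho$.

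First I would fix a density operator $\rho\in\mathcal{A}$ and an arbitrary $X\in M_m(\C)$, and use the defining relation of $\Phi^{\dagger}$ to write $\mathrm{Tr}(X\Phi(\rho))=\mathrm{Tr}(\Phi^{\dagger}(X)\rho)$. Since $\Phi^{\dagger}(X)$ lies in $S=\mathrm{range}(\Phi^{\dagger})$ and $\rho\in\mathcal{A}$, the quasiorthogonality hypothesis $n\,\mathrm{Tr}(sa)=\mathrm{Tr}(s)\mathrm{Tr}(a)$ applies with $s=\Phi^{\dagger}(X)$ and $a=\rho$, giving $n\,\mathrm{Tr}(\Phi^{\dagger}(X)\rho)=\mathrm{Tr}(\Phi^{\dagger}(X))\mathrm{Tr}(\rho)=\mathrm{Tr}(\Phi^{\dagger}(X))$, because $\mathrm{Tr}(\rho)=1$. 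Next I would convert the right-hand side back to the $\Phi$ side: setting $Y=I_n$ in the adjoint relation yields $\mathrm{Tr}(\Phi^{\dagger}(X))=\mathrm{Tr}(\Phi^{\dagger}(X)I_n)=\mathrm{Tr}(X\Phi(I_n))$. Combining, $\mathrm{Tr}(X\Phi(\rho))=\tfrac1n\mathrm{Tr}(X\Phi(I_n))$ for every $X\in M_m(\C)$, so nondegeneracy of the Hilbert--Schmidt pairing forces $\Phi(\rho)=\tfrac1n\Phi(I_n)$.

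Finally I would set $\rho_0:=\tfrac1n\Phi(I_n)$ and note this is manifestly independent of the choice of $\rho\in\mathcal{A}$; it is positive because $\Phi$ is completely positive, and $\mathrm{Tr}(\rho_0)=\tfrac1n\mathrm{Tr}(\Phi(I_n))=\tfrac1n\mathrm{Tr}(I_n)=1$ since $\Phi$ is trace-preserving, so $\rho_0$ is a density operator. Hence $\Phi$ privatises $\mathcal{A}$, with fixed output state $\tfrac1n\Phi(I_n)$, which reduces to $\tfrac1n I_n$ precisely when $\Phi$ is additionally unital, recovering privatisation to the unit in that case.

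\emph{Main obstacle.} There is no serious difficulty here: once the objects are paired correctly the argument is a two-line computation. The only points needing a moment's care are that every element of $S$ really has the form $\Phi^{\dagger}(X)$ (so that quasiorthogonality may be invoked for $\Phi^{\dagger}(X)$ with arbitrary, possibly non-self-adjoint, $X$), which holds since $S$ is by definition the range of $\Phi^{\dagger}$; and the use of nondegeneracy of the trace pairing to upgrade the scalar identities $\mathrm{Tr}(X\Phi(\rho))=\tfrac1n\mathrm{Tr}(X\Phi(I_n))$ to the operator identity $\Phi(\rho)=\tfrac1n\Phi(I_n)$.
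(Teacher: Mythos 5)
Your proposal is correct and follows essentially the same route as the paper's proof: pair $\Phi(\rho)$ against an arbitrary matrix, pass to $\Phi^{\dagger}$, invoke quasiorthogonality with $s=\Phi^{\dagger}(X)$, and use nondegeneracy of the Hilbert--Schmidt pairing to conclude $\Phi(\rho)=\tfrac{\mathrm{Tr}(\rho)}{n}\Phi(I)$. The extra verification that $\tfrac1n\Phi(I_n)$ is a density operator is a harmless addition the paper leaves implicit.
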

\begin{proof} For all $a\in\mathcal{A}$ and $x\in M_n(\C)$ we have
\begin{align*} \mathrm{Tr}(x\Phi(a)) & = \mathrm{Tr}(\Phi^{\dagger}(x)a) \\
& = \frac{1}{n}\mathrm{Tr}(\Phi^{\dagger}(x))\mathrm{Tr}(a) \\
& = \mathrm{Tr}(x\Phi(I)\frac{\mathrm{Tr}(a)}{n})\end{align*}
and so $\Phi(a) = \frac{\mathrm{Tr}(a)}{n}\Phi(I)$.
\end{proof}

Lemma~\ref{opsyspriv} is convenient for us, as it will allow us to state a simple and necessary condition on algebras privatised by a given arbitrary unital channel.

\begin{prop}\label{algsandopsys} Let $\Phi:M_n(\C)\rightarrow M_m(\C)$ be a unital quantum channel, with $S = \mathrm{range}(\Phi^{\dagger})$ the operator system that is the range of the adjoint. Let $\mathcal{A}$ be a $*$-subalgebra contained in $S$.
If a $*$-subalgebra $\mathcal{B}$ is private for $\Phi$, necessarily $\mathcal{A}$ and $\mathcal{B}$ are quasiorthogonal algebras.
\end{prop}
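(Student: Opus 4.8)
The plan is to deduce this from \lemref{opsyspriv}. The key observation is that \lemref{opsyspriv} has a converse flavour hidden inside it: if $\mathcal{B}$ is private for $\Phi$ and $\mathcal{A} \subseteq S = \mathrm{range}(\Phi^\dagger)$ is a $*$-subalgebra, then I want to show $n\,\mathrm{Tr}(ab) = \mathrm{Tr}(a)\mathrm{Tr}(b)$ for all $a \in \mathcal{A}$, $b \in \mathcal{B}$. Since $\mathcal{A} \subseteq S$, every $a \in \mathcal{A}$ can be written as $a = \Phi^\dagger(x)$ for some $x \in M_n(\C)$. Then I would compute $\mathrm{Tr}(ab) = \mathrm{Tr}(\Phi^\dagger(x) b) = \mathrm{Tr}(x\,\Phi(b))$, using the defining adjoint relation. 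Now invoke that $\mathcal{B}$ is private: because $\Phi$ is unital and (being a channel) trace-preserving, privatisation of the unital $*$-subalgebra $\mathcal{B}$ must be to the unit, so $\Phi(b) = \frac{\mathrm{Tr}(b)}{n}\,I_m$ for all $b \in \mathcal{B}$ (this is the point flagged in the remark after \defnref{defnpriv}, that for unital channels containing the maximally mixed state in the privatised set one must privatise to the identity; here $\mathcal{B}$ contains $\frac{1}{n}I_n$).

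Carrying this through: $\mathrm{Tr}(ab) = \mathrm{Tr}\bigl(x \cdot \tfrac{\mathrm{Tr}(b)}{n} I_m\bigr) = \tfrac{\mathrm{Tr}(b)}{n}\,\mathrm{Tr}(x)$. So it remains to identify $\mathrm{Tr}(x)$ with $\mathrm{Tr}(a)$ up to the right constant. Since $\Phi$ is trace-preserving, $\mathrm{Tr}(\Phi^\dagger(x)) = \mathrm{Tr}(x \,\Phi(I_n)) = \mathrm{Tr}(x \cdot I_m) = \mathrm{Tr}(x)$, using unitality $\Phi(I_n) = I_m$. Hence $\mathrm{Tr}(x) = \mathrm{Tr}(\Phi^\dagger(x)) = \mathrm{Tr}(a)$, and we get $\mathrm{Tr}(ab) = \frac{1}{n}\mathrm{Tr}(a)\mathrm{Tr}(b)$, i.e. $n\,\mathrm{Tr}(ab) = \mathrm{Tr}(a)\mathrm{Tr}(b)$, which is exactly quasiorthogonality of $\mathcal{A}$ and $\mathcal{B}$. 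I would present this as a short align-style computation mirroring the proof of \lemref{opsyspriv}, just run in the opposite direction.

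The main thing to be careful about — and what I'd flag as the only real obstacle — is the claim that privatisation of $\mathcal{B}$ here is necessarily to the unit, i.e. that $\rho_0 = \frac{1}{n}I$. This needs $\mathcal{B}$ to be a unital $*$-subalgebra (so that $\frac{1}{n}I_n \in \mathcal{B}$ as a density operator) together with unitality of $\Phi$; both hold under the hypotheses, but it's worth stating explicitly, since without it one only gets $\Phi(b) = \mathrm{Tr}(b)\rho_0$ and the subsequent trace identity would involve $\mathrm{Tr}(x\rho_0)$ rather than $\frac{1}{n}\mathrm{Tr}(x)$, breaking the clean conclusion. A secondary minor point is dimensional bookkeeping: $\Phi$ maps $M_n(\C)\to M_m(\C)$ and the quasiorthogonality constant is $n$ (the source dimension), matching the definition given before \lemref{opsyspriv}; I should make sure the $I$ appearing as $\Phi(I_n)$ is $I_m$ and that $\mathrm{Tr}(I_m)$ never spuriously enters. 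Everything else is a two-line trace manipulation.
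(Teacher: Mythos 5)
Your argument is correct and is essentially the paper's: the paper proves the proposition by a one-line appeal to Lemma~\ref{opsyspriv} together with $\mathcal{A}\subseteq S$, and the content behind that appeal is exactly the adjoint-trace computation $\mathrm{Tr}(\Phi^{\dagger}(x)b)=\mathrm{Tr}(x\Phi(b))$ run in the reverse direction, which is what you write out, including the (correct) observation that unitality of $\Phi$ plus $\tfrac1n I\in\mathcal{B}$ forces privatisation to the unit. Your only blemish is cosmetic: the identity $\mathrm{Tr}(\Phi^{\dagger}(x))=\mathrm{Tr}(x\Phi(I_n))$ comes from the adjoint relation with $Y=I_n$ and unitality of $\Phi$, not from trace preservation as you briefly label it.
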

\begin{proof} Apply Lemma \ref{opsyspriv} and the fact that $S$ contains $\mathcal{A}$.
\end{proof}

For the rest of this section we focus on Schur product channels. Observe that in the case that $\Phi(X) = X\circ C$ for some correlation matrix $C$, it is easy to see that $\Phi^{\dagger}(X) = \overline{C}\circ X$, where $\overline{C}$ is the matrix whose entries are the complex conjugates of $C$. Thus, it follows that the range of $\Phi^{\dagger}$ is the operator system spanned by the matrix units $E_{ij}=\kb{i}{j}$ such that $C_{ij} \neq 0$.

\begin{defn} Let $G$ be a graph with vertex set $V = \{1,2,\ldots, n\}$ and edge set $E$. The operator system of $G$, $S_G$, is the subspace of $M_n(\C)$ given by
$$S_G = \mathrm{span}\{E_{ij}: (i,j)\in E(G)\}\cup\{E_{ii}: 1\leq i \leq n\}.$$
\end{defn}
Let $C$ be an $n\times n$ correlation matrix. As above, let $G_C$ be the graph on $n$ vertices with an edge $(i,j)$ whenever $C_{ij} \neq 0$. Then the operator system $\mathrm{range} (X\circ \overline{C})$ is the operator system $S_{G_C}$.\\

If $\Phi(X)=C\circ X$, then $\Phi^{\dagger}(X)=\overline{C}\circ X$, so the operator system $S =\mathrm{range}(\Phi^{\dagger})$ is the same as the operator system $\mathrm{range}(\Phi)$. We can prove a general privacy result for Schur product channels as follows.

\begin{lem}\label{orthogtodiag} If $\mathcal{A}$ is a $*$-subalgebra privatised by a Schur product channel $\Phi:M_n(\C)\rightarrow M_n(\C)$, then $\mathcal{A}$ is quasiorthogonal to the diagonal algebra on $n\times n$ matrices, $\Delta_n$.
\end{lem}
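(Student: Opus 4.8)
The plan is to deduce this directly from \propref{algsandopsys}; the only genuinely new input is the elementary remark that the diagonal algebra is always contained in the operator system of a Schur product channel.

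First I would note that a Schur product channel $\Phi(X)=C\circ X$ with $C$ a correlation matrix is unital: since $C$ has $1$'s on its diagonal, $\Phi(I)=C\circ I=I$ (equivalently, unitality and trace preservation coincide for such maps, as recorded earlier). Thus \propref{algsandopsys} is available, and $S:=\mathrm{range}(\Phi^{\dagger})$ is an operator system. Second, I would identify $S$ explicitly: since $\Phi^{\dagger}(X)=\overline{C}\circ X$ we have $\Phi^{\dagger}(E_{ij})=\overline{C_{ij}}\,E_{ij}$, so $S=S_{G_C}=\mathrm{span}\{E_{ij}:C_{ij}\neq 0\}$. Because $C$ is a correlation matrix, $C_{ii}=1\neq 0$ for every $i$, so each diagonal matrix unit $E_{ii}$ lies in $S$; hence $\Delta_n\subseteq S$.

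With these two observations in hand the proof is immediate: $\Delta_n$ is a $*$-subalgebra of $M_n(\C)$ contained in $S=\mathrm{range}(\Phi^{\dagger})$, and $\mathcal{A}$ is private for $\Phi$, so \propref{algsandopsys} --- applied with $\Delta_n$ playing the role of the subalgebra inside $S$ and $\mathcal{A}$ playing the role of the private subalgebra --- yields that $\Delta_n$ and $\mathcal{A}$ are quasiorthogonal.

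I do not expect a real obstacle here; the substance of the statement has already been packaged into \propref{algsandopsys} and \lemref{opsyspriv}, and what remains is just the inclusion $\Delta_n\subseteq S_{G_C}$. If one wants a self-contained argument avoiding \propref{algsandopsys}, one can reason directly: privatisation to the unit forces $\Phi(a)=\tfrac{\tr(a)}{n}I$ for $a\in\mathcal{A}$, and comparing diagonal entries using $(C\circ a)_{ii}=a_{ii}$ shows every $a\in\mathcal{A}$ has constant diagonal $a_{ii}=\tr(a)/n$ (cf.\ \propref{zeroentries}); then for any diagonal $d=\diag(d_1,\dots,d_n)$ one computes $\tr(da)=\sum_i d_i a_{ii}=\tfrac{\tr(a)}{n}\sum_i d_i=\tfrac1n\tr(d)\tr(a)$, i.e.\ $n\,\tr(da)=\tr(d)\tr(a)$. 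The one point to keep in mind --- as throughout this setting --- is that ``privatised'' is read as ``privatised to the unit''; this is automatic whenever $\mathcal{A}$ is unital in $M_n(\C)$, since the unital channel $\Phi$ fixes $\tfrac1n I$.
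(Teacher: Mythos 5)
Your argument is essentially the paper's own proof: you show $\Delta_n\subseteq S=\mathrm{range}(\Phi^{\dagger})$ (the paper does this by noting $\Phi(D)=C\circ D=D$ for diagonal $D$, you via $\Phi^{\dagger}(E_{ii})=E_{ii}$, which is the same observation) and then invoke Proposition~\ref{algsandopsys}. The extra self-contained verification and the remark about privatisation to the unit are fine but not needed; the proposal is correct.
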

\begin{proof} Note first that for a diagonal matrix $D$, we have $\Phi(D)=C\circ D = D$, and so the full diagonal algebra is contained in the range of $\Phi$, and hence inside $S=\mathrm{range}(\Phi^{\dagger})$.  Given this fact, the result follows as a consequence of Proposition \ref{algsandopsys}.
\end{proof}

Let $\mathcal{A}$ be a $*$-subalgebra of $M_n(\C)$. We recall that a {\it separating vector} for $\mathcal{A}$ is a vector $v\in \mathbb{C}^n$ such that $av = 0$ implies $a=0$ for all $a\in \mathcal{A}$. Given the decomposition of $\mathcal{A}$ up to unitary equivalence, $\mathcal{A} \simeq \bigoplus_{k=1}^m I_{i_k}\otimes M_{j_k}$, one can show that $\mathcal{A}$ admits a separating vector if and only if $i_k \geq j_k$ for all $1\leq k \leq m$. Moreover, it is known \cite{pereira} that
a $*$-subalgebra is quasiorthogonal to $\Delta_n$ if and only if it admits a separating vector. Combining these facts with Lemma~\ref{orthogtodiag} allows us to prove the following theorem.

\begin{thm}\label{privifsepvec} A necessary condition for an algebra $\mathcal{A}$ to be be privatised by a Schur product channel $\Phi:M_n(\C)\rightarrow M_n(\C)$ is that $\mathcal{A}$ admits a separating vector.\end{thm}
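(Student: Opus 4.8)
The plan is to read the theorem off as an immediate corollary of the two facts already assembled in this section, with essentially no new argument required. First I would invoke \lemref{orthogtodiag}: if a $*$-subalgebra $\mathcal{A}\subseteq M_n(\C)$ is privatised by a Schur product channel $\Phi$, then $\mathcal{A}$ is quasiorthogonal to the diagonal algebra $\Delta_n$. Implicit here, and worth recording in the write-up, is that a Schur product channel $\Phi(X)=C\circ X$ is automatically unital and trace-preserving — $\Phi(I)=C\circ I=I$ because $C$ has unit diagonal, and $\mathrm{Tr}(C\circ X)=\sum_i C_{ii}X_{ii}=\mathrm{Tr}(X)$ — so the privacy in question is necessarily privatisation to the unit and the hypotheses of the preceding lemmas genuinely apply.

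Second, I would quote the structural characterization cited from \cite{pereira}: a $*$-subalgebra of $M_n(\C)$ is quasiorthogonal to $\Delta_n$ if and only if it admits a separating vector. Chaining the two statements then gives exactly the asserted necessary condition: $\mathcal{A}$ privatised by $\Phi$ $\Rightarrow$ $\mathcal{A}$ quasiorthogonal to $\Delta_n$ $\Rightarrow$ $\mathcal{A}$ admits a separating vector. This is the entire proof; I would write it as two lines citing \lemref{orthogtodiag} and \cite{pereira}.

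Since the substantive work has already been carried out — in \lemref{orthogtodiag}, which rests on \propref{algsandopsys} together with the observation $\Delta_n\subseteq\mathrm{range}(\Phi)$, and in the cited equivalence between quasiorthogonality to $\Delta_n$ and existence of a separating vector — there is no genuine obstacle left to clear. The only point demanding care is the one noted above: ensuring that the sense of ``privatised'' in \lemref{orthogtodiag} (privatisation to the unit) is the operative one, which is forced here by unitality of Schur product channels. Optionally I would append a sentence translating the conclusion through the Wedderburn decomposition $\mathcal{A}\simeq\bigoplus_{k=1}^m I_{i_k}\otimes M_{j_k}$ into the concrete inequalities $i_k\geq j_k$ for all $k$; this immediately recovers the negative part of Example~\ref{nopriv} — a single copy of the diagonal-projection channel on $M_2(\C)$ cannot privatise a qubit, since $\mathcal{A}\simeq M_2$ would force $1=i_1\geq j_1=2$ — while the ampliated algebra $I_2\otimes M_2\subseteq M_4(\C)$ does admit a separating vector, consistent with the positive part of that example.
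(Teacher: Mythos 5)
Your proposal is correct and matches the paper's own argument exactly: the theorem is obtained by chaining \lemref{orthogtodiag} (a privatised algebra is quasiorthogonal to $\Delta_n$) with the cited equivalence from \cite{pereira} between quasiorthogonality to $\Delta_n$ and the existence of a separating vector. The additional remarks on unitality and the translation via the decomposition $\bigoplus_k I_{i_k}\otimes M_{j_k}$ are accurate but not needed beyond what the paper itself records.
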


We can strengthen Theorem \ref{privifsepvec} in the case that we can find more algebras inside the range of $\Phi$. In order to show how we can obtain stronger results in the case, we first need to recall some notions from matrix and operator theory.

Let $C$ be an $n\times n$ correlation matrix and $\Phi(X)=X\circ C$. Then $\Phi^{\dagger}(X) = \overline{C}\circ X$ is also a unital and trace-preserving quantum channel. Any unital channel has an associated {\it multiplicative domain}; the set
$\{X \in M_n(\C): \Phi(X)\Phi(X^*) = \Phi(XX^*)\},$ which is necessarily an algebra, and $X$ in the multiplicative domain satisfies $\Phi(XA)=\Phi(X)\Phi(A)$ for all $A\in M_n(\C)$. The multiplicative domain is a standard structure of interest in the study of completely positive maps, and more recently it has found applications in quantum error correction \cite{choi,johnston}.

Now let $\mathcal{A}$ be the multiplicative domain of $\Phi^{\dagger}$, then
\begin{equation*} \mathrm{Tr}(XY\Phi(I))  = \mathrm{Tr}(\Phi^{\dagger}(XY))
= \mathrm{Tr}(\Phi^{\dagger}(X)\Phi^{\dagger}(Y))
= \mathrm{Tr}((\Phi\circ\Phi^{\dagger})(X)Y) \end{equation*}
for all $Y$, and so $(\Phi\circ\Phi^{\dagger})(X)=X\Phi(I)$. Thus when $\Phi$ is unital, $X$ is a fixed point of $\Phi\circ\Phi^{\dagger}$. In general the fixed point set of a channel is an operator system, but in the case of a unital channel more is true, it is known to be an algebra \cite{kribsfixed}.
Clearly, the fixed point algebra $\mathrm{Fix}(\Phi\circ\Phi^{\dagger})\subseteq \mathrm{range}(\Phi)$, and so following Proposition~\ref{algsandopsys}, if a $*$-subalgebra $\mathcal{A}$ is privatised by $\Phi$, then $\mathcal{A}$ is necessarily quasiorthogonal to $\mathrm{Fix}(\Phi\circ\Phi^{\dagger})$, and hence also to the multiplicative domain of $\Phi$.

For a Schur product channel, the multiplicative domain algebra is a subset of the fixed point algebra of the map $X\mapsto\overline{C}\circ (C\circ X) = (\overline{C}\circ C)\circ X.$ An operator $X$ is such a fixed point so long as $X_{ij}=0$ for all $(i,j)$ such that $|C_{ij}|^2 \neq 1$. Notice that  Lemma~\ref{orthogtodiag} can be re-characterized in light of the above as the trivial observation that, for a correlation matrix, $|C_{ii}|^2 =1$ for any correlation matrix. For a correlation matrix $C$ with other entries of modulus $1$, we can strengthen the claims of Lemma \ref{orthogtodiag} and Theorem \ref{privifsepvec}; namely, any $*$-subalgebra $\mathcal{A}$ privatised by the channel $\Phi(X) = C\circ X$ must be quasiorthogonal to any algebra which has non-zero entries at $(i,j)$ only if $|C_{ij}|^2 =1$.

\subsection{Unitary Graphs of Correlation Matrices} We conclude by returning to a graph-theoretic perspective and considering the following notion.

\begin{defn} Let $C=(C_{ij})$ be an $n\times n$ correlation matrix. The {\it unitary graph} $UG_C$ of $C$ is the graph whose vertices are labelled by $1,2,\ldots,n$ with an edge from $i$ to $j$ if and only if $|C_{ij}|^2 = 1$.
\end{defn}

\begin{thm} Let $UG_C$ be the unitary graph of a correlation matrix $C$. Then any connected component of $UG_C$ having more than $1$ vertex is a complete graph.
\end{thm}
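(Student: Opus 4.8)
The plan is to prove that the relation on the vertex set defined by $i \sim j \iff |C_{ij}|^2 = 1$ is an equivalence relation. Once that is in hand, the connected components of $UG_C$ are exactly its equivalence classes: any path in $UG_C$ chains the relation together, so two vertices in the same component are equivalent, and conversely $i \sim j$ with $i \neq j$ is by definition an edge. Each class therefore induces a complete subgraph, so every component with more than one vertex is complete.

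Reflexivity is immediate from $C_{ii} = 1$, and symmetry from $C$ being Hermitian, so $|C_{ji}| = |C_{ij}|$. The only real content is transitivity: assuming $|C_{ij}|^2 = |C_{jk}|^2 = 1$, show $|C_{ik}|^2 = 1$. This is where positive semidefiniteness of $C$ enters. One clean route is the Gram picture: factor $C = V^*V$ and write $C_{ij} = \langle v_i, v_j\rangle$ for unit vectors $v_1,\dots,v_n$. By the equality case of Cauchy--Schwarz, $|C_{ij}|^2 = 1 = \|v_i\|^2\|v_j\|^2$ holds precisely when $v_i$ and $v_j$ are scalar multiples of one another; since being parallel is transitive among nonzero vectors, $\sim$ is transitive.

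A more self-contained alternative is to work inside the $3\times 3$ principal submatrix $M$ of $C$ on indices $\{i,j,k\}$. Since $|C_{ij}|^2 = 1$, the $\{i,j\}$ block of $M$ has determinant $1 - |C_{ij}|^2 = 0$, and the vector $u$ with $u_i = -C_{ij}$, $u_j = 1$, $u_k = 0$ annihilates that block; hence $u^* M u = 0$, and positivity of $M$ forces $Mu = 0$. Reading off the $k$-th coordinate of $Mu = 0$ gives $C_{jk} = \overline{C_{ij}}\,C_{ik}$, so $|C_{jk}| = |C_{ij}|\,|C_{ik}| = |C_{ik}|$; since $|C_{jk}| = 1$ this yields $|C_{ik}| = 1$.

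The main (and essentially only) obstacle is this transitivity step, and it is precisely the positive semidefiniteness of the correlation matrix that makes it work: neither Hermitian symmetry nor the unit-diagonal condition alone would suffice, but together with $C \succeq 0$ they constrain the $2\times 2$ and $3\times 3$ principal minors tightly enough to force the conclusion.
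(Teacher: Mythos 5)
Your proof is correct, and it is organized differently from the paper's. The paper argues by induction on the size of a connected component: the base case is exactly your ``self-contained alternative'' (positivity of the $3\times 3$ principal submatrix on $\{i,j,k\}$ forces $C_{jk}=\overline{C_{ij}}C_{ik}$, hence $|C_{ik}|=1$), and the inductive step, which is stated rather tersely there, extends this to larger components. You instead package the key fact as transitivity of the relation $i\sim j \iff |C_{ij}|=1$, proved via the Gram representation $C_{ij}=\langle v_i,v_j\rangle$ with unit vectors and the equality case of Cauchy--Schwarz (parallelism of nonzero vectors is transitive), and then observe that components of $UG_C$ are precisely the equivalence classes, so each is complete. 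What your route buys is that it eliminates the induction entirely: once $\sim$ is an equivalence relation, path-connectedness immediately upgrades to adjacency, which is cleaner than the paper's step from $n$ to $n+1$ vertices (where, as written, one would still need to argue that removing a vertex from a connected component leaves something to which the hypothesis applies). What the paper's computation buys -- and your fallback argument reproduces it -- is an explicit identity $C_{jk}=\overline{C_{ij}}C_{ik}$ showing exactly how positive semidefiniteness, and not just Hermitian symmetry with unit diagonal, forces the conclusion; your remarks correctly identify this as the only real content of the theorem. Both of your arguments are sound, including the use of $u^*Mu=0 \Rightarrow Mu=0$ for $M\succeq 0$, so there is no gap to report.
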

\begin{proof} We proceed by induction. The first non-trivial case is a connected component with three vertices. Denote the vertices $i,j,k$ with edges $(i,j)$, $(j,k)$. Then the submatrix of $C$ indexed by $i,j,k$ is
\begin{equation*} C[i,j,k] = \begin{pmatrix} 1 & C_{ij} & C_{ik} \\ \overline{C}_{ij} & 1 & C_{jk} \\ \overline{C}_{ik} & \overline{C}_{jk} & 1 \end{pmatrix}\end{equation*} which, as a principle submatrix of a positive semidefinite matrix must be positive itself. Hence
\begin{equation*} \begin{pmatrix} 1 & C_{jk} \\ \overline{C}_{jk} & 1 \end{pmatrix} - \begin{bmatrix}\overline{C}_{ij} \\ \overline{C}_{ik} \end{bmatrix}\begin{bmatrix} C_{ij} & C_{ik} \end{bmatrix} = \begin{pmatrix} 0 & C_{jk} - \overline{C}_{ij}C_{ik} \\ \overline{C}_{jk}-C_{ij}\overline{C}_{ik} & 1-|C_{ik}|^2\end{pmatrix}\geq 0\end{equation*}
and hence $|C_{ik}| = |C_{jk}|/|C_{ij}| = 1$ and $(i,k)$ is an edge as well.

Now, assume that it is true that for all connected components on $n$ vertices, the unitary graph must be complete. Then for a graph of size $n+1$, each subgraph of size $n$ must be complete, and we are done.
\end{proof}

Thus $UG_C$ is a subgraph of $G_C$ consisting of disconnected components $\{G_i\}_{i=1}^m$, each of which is a complete graph on $k_i$ vertices, $\sum_{i=1}^m k_i = n$. So the operator system generated by $UG_C$, $S_{UG_C}$, is (up to unitary equivalence)
\begin{equation*} S_{UG_C} \cong \bigoplus_{i=1}^m M_{k_i}(\C).\end{equation*}
This is, of course, a $*$-subalgebra, and if $\mathcal{A}$ is privatised by $\Phi$, then necessarily $\mathcal{A}$ is quasiorthogonal to the algebra $S_{UG_C}$.

In a sense, this gives us a lower bound on the size of algebras that a private algebra must be quasiorthogonal to. We have already seen that any algebra privatised by a Schur product channel must be quasiorthogonal to the diagonal algebra, and hence must contain a separating vector. Our preceding remarks strengthen this: the algebra defined by having non-zero entries wherever $C$ has an entry of modulus $1$ is a larger algebra, necessarily containing the diagonal algebra, such that any algebra privatised by the channel must be quasiorthogonal to this algebra. This ``lower bound" algebra is, up to unitary equivalence, simply a direct sum of matrix algebras.


As a consequence of von Neumann's double commutant theorem, we have that the $*$-algebra generated by $S$ is $S^{\prime\prime}$, the commutant of the commutant of $S$.
Given a graph $G$, the let $S_G^{\prime\prime}$ be the algebra generated by the operator system of $G$, $S_G$. If we define $G^*$ to be the graph with the same vertex set as $G$ and with an edge between two vertices if and only if they are in the same connected component of $G$, then $S_G^{\prime\prime}=S_{G^*}$. It is easy to see that if an algebra $\mathcal{B}$ is quasiorthogonal to $S_G^{\prime\prime}$ then $\mathcal{B}$ is privatised by any Schur product channel $\Phi(X) = X\circ C$ for which the graph of $C$ is $G$.

Hence, given a Schur product channel, $\Phi(X)=X\circ C$, we have two algebras $S_{UG_C}$ and $S_{G_C}^{\prime\prime}$, which define necessary and sufficient conditions respectively for an algebra to be privatised by $\Phi$. Obviously, if these two algebras are equal, which occurs if and only if $C$ is permutationally similar to a direct sum of rank one correlation matrices, we can completely characterize the algebras privatised by $\Phi$.

We conclude with an example that serves to illustrate how these necessary and sufficient conditions are, in general, not strong enough to completely characterize the algebras that are privatised by $\Phi$.

\begin{example} Let $\Phi:M_3(\C)\rightarrow M_3(\C)$ be given by $\Phi(X) = X\circ C$ with
$$C=\begin{pmatrix} 1 & 0 & \frac{1}{2} \\ 0& 1 & \frac{1}{2} \\ \frac{1}{2} & \frac{1}{2} & 1 \end{pmatrix}.$$
The graph of $C$ is
\[ G_C =
\begin{tikzpicture}[thick, scale=0.8]
\node[shape=circle,draw=black](1) at (1,0) {1};
\node[shape=circle,draw=black](2) at (3,0){2};
\node[shape=circle,draw=black](3) at (5,0){3};
\draw(3.4,0) to (4.6,0);
\draw(1.4,0) to[out=60,in=135] (4.6,0);
\end{tikzpicture},\]
while the unitary graph is

\[UG_C = \begin{tikzpicture}[thick, scale=0.8]
\node[shape=circle,draw=black] (1) at (1,0) {1};
\node[shape=circle,draw=black] (2) at (3,0){2};
\node[shape=circle,draw=black] (3) at (5,0){3};
\end{tikzpicture},\]
the completely disconnected graph. Hence, $\mathcal{A}_{G_C} = M_3(\C)$ and $S_{UG_C} = \Delta_3$.

However, being quasiorthogonal to $M_3(\C)$ and $\Delta_3$ are not necessary or sufficient respectively, for an algebra to be privatised by $\Phi$; for instance, the algebra
$$\mathcal{A} = \left\{ \begin{pmatrix} a & b & 0 \\ b & a & 0 \\ 0 & 0 & a \end{pmatrix}: \ a,b\in\C \right\}$$ is privatised by $\Phi$, despite not being quasiorthogonal to $M_3(\C)$, and the circulant algebra
$$\mathcal{C} = \left\{ \begin{pmatrix} a & b & c \\ c & a & b \\ b & c & a \end{pmatrix}: \ a,b,c\in \C\right\}$$ is   quasiorthogonal to $\Delta_3$, but is not privatised by $\Phi$.
\end{example}

\begin{remark}
We conclude with a note on our use of graph theory here and related work. A number of papers have commented on the fact that operator systems behave in many ways like 'non-commutative' or 'quantum' versions of graphs. In \cite{duan} it was shown that certain properties of an operator system can be regarded as quantum analogues of the independence number and Lovasz number of a graph, while in \cite{weaver} quantum cliques have been considered, where an analogue of Ramsey theory for operator systems was established. Other authors have focused on defining quantum chromatic numbers, and other graph parameters \cite{cameron, paulsen2013quantum}. The study of these `quantum' graph parameters is an active and interesting area of research in pure mathematics and quantum information.

What the present analysis aims to do is to understand how certain operator systems associated to a channel control the ability of that channel to privatise information. Schur product channels are the channels whose operator systems are the simplest as they are isomorphic to graphs, and so studying privatisation by means of Schur product channels is the simplest place to start such an analysis. Our work shows that a certain graph parameter, the independence number, controls how copies of a channel can privatise information; we expect that some quantum version of the independence number should play an analogous role for channels whose ranges are operator systems more complicated than those arising directly from graphs.

One final point of note is that not one, but two graphs are important for understanding how a Schur product channel privatises a $\ast$-subalgebra: the graph whose associated operator system is the range of the channel, and a graph that encodes the multiplicative domain of the channel. This seems like a hint that in order to understand how operator systems of channels allow privatisation, the multiplicative domain will be an important object.
\end{remark}

\vspace{0.1in}

{\noindent}{\it Acknowledgements.} J.L. was supported by an AIMS-University of Guelph Postdoctoral Fellowship. D.W.K. was supported by NSERC and a University Research Chair at Guelph. R.P. was supported by NSERC. We are grateful to the referees for helpful comments.

\bibliographystyle{plain}
\bibliography{schur}

\end{document}